\documentclass[conference]{IEEEtran}
\IEEEoverridecommandlockouts
\usepackage{cite}
\usepackage{amsmath, amsfonts, amssymb}
\usepackage{mathtools}
\usepackage{algorithm, algorithmic}
\usepackage{tikz}
\usepackage{graphicx}
\usepackage{hyperref}
\usepackage{amsthm}
\usepackage{booktabs}
\usepackage{textcomp}
\usepackage{xcolor}
\usepackage[T1]{fontenc}
\usepackage{mathptmx}

\usetikzlibrary{arrows.meta, positioning, shapes.multipart, shapes.geometric}
\usetikzlibrary{decorations.pathreplacing}

\newtheorem{definition}{Definition}
\newtheorem{theorem}{Theorem}
\newtheorem{remark}{Remark}

\def\BibTeX{{\rm B\kern-.05em{\sc i\kern-.025em b}\kern-.08em
    T\kern-.1667em\lower.7ex\hbox{E}\kern-.125emX}}
\begin{document}

\title{RegGuard: Legitimacy and Fairness Enforcement for Optimistic Rollups}

\author{
\IEEEauthorblockN{Zhenhang Shang}
\IEEEauthorblockA{The Hong Kong University of \\Science and Technology \\Hong Kong, China \\zshangab@connect.ust.hk}
\and
\IEEEauthorblockN{Yingzhe Yu}
\IEEEauthorblockA{The Hong Kong University of \\Science and Technology \\Hong Kong, China}
\and
\IEEEauthorblockN{Kani Chen}
\IEEEauthorblockA{The Hong Kong University of \\Science and Technology \\Hong Kong, China}
}

\maketitle

\begin{abstract}
Optimistic rollups provide scalable smart-contract execution but remain unsuitable for regulated financial applications due to three gaps: lack of semantic legitimacy checks, vulnerability to L1-L2 state divergence, and susceptibility to MEV-driven transaction reordering. We propose RegGuard, a unified framework that adds formal legitimacy guarantees to optimistic rollups. RegGuard includes: (1) a decidable semantic validator using the RegSpec rule language to encode and enforce regulatory and business constraints; (2) a state pre-synchronization validator that detects inconsistent cross-layer assumptions via a high-freshness L1 cache and differential dependency tracking; and (3) a verifiable fair-ordering protocol based on threshold encryption and binding commitments, achieving \((\alpha,\beta)\)-fair sequencing under standard cryptographic assumptions. We formalize correctness and fairness properties for each component and implement a 15k-LOC prototype integrated into an Optimism-based rollup. Experiments on a distributed testbed show that RegGuard reduces settlement failures by over 90\%, prevents detectable ordering manipulation, and sustains more than 85\% of baseline throughput, demonstrating that strong legitimacy guarantees can coexist with rollup scalability.
\end{abstract}

\begin{IEEEkeywords}
blockchain, optimistic rollups, MEV, regulatory compliance, formal verification, decentralized finance
\end{IEEEkeywords}

\section{Introduction}
\label{sec:introduction}

Optimistic rollups (ORUs) have emerged as a leading scaling solution for blockchain networks, offering high-throughput smart contract execution while leveraging the security guarantees of layer-1 blockchains \cite{armstrong2021ethereum}. This performance profile makes ORUs particularly appealing for regulated financial applications, including tokenized funds, digital securities, and real-world asset (RWA) platforms. However, current ORU architectures lack critical safeguards required in compliant financial infrastructure \cite{arabsorkhi2024blockchain}. Traditional financial systems enforce comprehensive semantic validation—verifying business rules, eligibility conditions, and regulatory constraints, whereas today's rollups primarily check only syntactic validity through signature verification and basic format checks \cite{saif2024survey}. This fundamental gap limits ORUs' applicability for institutional use cases demanding stronger legitimacy guarantees.

Consider a tokenized real-estate loan fund that issues on-chain shares backed by off-chain mortgage assets. In such a regulated environment, investors must satisfy jurisdictional KYC/AML requirements, concentration limits, and fund-specific exposure rules \cite{mottaghi2024real}. Yet a conventional ORU sequencer, operating solely on syntactic validity, might process transactions that mint shares for ineligible addresses. Furthermore, if the fund's net asset value depends on L1 oracle updates, stale L2 state could trigger mispriced redemptions that fail during settlement. Compounding these issues, sequencers could reorder subscription and redemption requests to extract maximal extractable value (MEV), violating fair execution principles and regulatory standards \cite{gramlich2024maximal}. These problems stem not from application-level errors but from structural limitations in existing rollup pipelines.

We identify three fundamental challenges preventing ORUs from meeting institutional requirements: (1) the \emph{semantic legitimacy gap}, where syntactically valid but business-illegitimate transactions execute unchecked; (2) \emph{cross-layer state conflicts}, where discrepancies between L1 and L2 states cause settlement failures; and (3) \emph{ordering illegitimacy}, where MEV exploitation undermines transaction fairness and user trust. Together, these limitations create significant barriers for regulated financial applications seeking to leverage rollup scalability.

To address these challenges, we introduce \textbf{RegGuard}, a unified framework that enhances optimistic rollups with comprehensive legitimacy guarantees. RegGuard integrates three coordinated mechanisms: a decidable semantic validator powered by the \emph{RegSpec} rule language for encoding regulatory constraints; a cross-layer state pre-synchronization validator that detects inconsistent L1-L2 dependencies with probabilistic reliability bounds; and a cryptographically verifiable fair-ordering service that ensures $(\alpha,\beta)$-fairness with negligible violation probability.

This paper makes the following contributions:

\begin{itemize}
    \item \textbf{A formal model of transaction legitimacy} that encompasses syntactic, semantic, and cross-layer dimensions, providing a foundation for rigorous analysis of optimistic rollup correctness properties in regulated environments.
    
    \item \textbf{The RegGuard framework} with three coordinated mechanisms offering provable guarantees: decidability for semantic validation through the RegSpec DSL, probabilistic reliability bounds for cross-layer state consistency, and cryptographic fairness for transaction ordering with negligible $\beta$-violation probability.
    
    \item \textbf{Implementation and evaluation} demonstrating that RegGuard imposes minimal overhead, making strong legitimacy enforcement practical for high-throughput rollup deployments while maintaining the performance advantages essential for production use.
\end{itemize}

\section{Related Work}
\label{sec:related-work}

\subsection{Rollup correctness and semantic legitimacy}

Work on rollup security has primarily focused on syntactic correctness and state transition validity. Fraud-proof systems such as Arbitrum~\cite{kalodner2018arbitrum} and validity-proof systems such as zkRollups~\cite{liu2024pianist} ensure that off-chain execution conforms to deterministic EVM semantics, but do not capture business or regulatory legitimacy. Formal verification tools, including KEVM~\cite{hildenbrandt2018kevm} and VeriSolid~\cite{mavridou2019verisolid}, provide correctness guarantees at the smart-contract level, yet they operate locally and require contract authors to embed compliance logic manually. None of these approaches address system-wide semantic constraints across the transaction pipeline. RegGuard complements these efforts by introducing a pluggable and decidable semantic validation layer that enforces regulatory and business rules across batches of transactions before execution.

\subsection{MEV mitigation and fair ordering}
Sequencing fairness has been widely studied in the context of MEV and ordering attacks. Time-based sequencing protocols~\cite{yang2024sok} and fair sequencing services~\cite{sarkar2023fairflow} attempt to impose temporal or committee-driven guarantees but often rely on trusted intermediaries or partial decentralization. Cryptographic defenses such as encrypted mempools~\cite{kavousi2023blindperm} and threshold-encrypted transaction schemes~\cite{bormet2025beast} offer stronger resistance to frontrunning, though existing designs typically assume layer-1 execution and do not integrate with rollup-specific batching and settlement models. RegGuard incorporates these cryptographic primitives into a rollup-compatible pipeline, using encryption and commitment schemes to achieve verifiable $(\alpha,\beta)$-fairness while preserving the performance properties necessary for optimistic rollup execution. More specifically, Flashbots~\cite{weintraub2022flash} proposed MEV-aware auction mechanisms for Ethereum, while Chainlink Fair Sequencing Services~\cite{breidenbach2021chainlink} explored decentralized ordering oracles. Themis~\cite{kelkar2021themis} formalized order-fairness properties and proposed protocols based on receive-order fairness. Compared to these approaches, RegGuard targets the rollup execution layer specifically, combining encrypted mempool techniques with rollup-compatible batching and L1 settlement, while also integrating semantic and cross-layer validation into a unified pipeline.

\subsection{Cross-layer consistency and blockchain compliance}
Cross-chain messaging protocols, including Cosmos IBC~\cite{kwon2019cosmos} and Polkadot XCMP~\cite{burdges2020overview}, provide secure state communication across chains but assume synchronous, mutually aware chains—assumptions that do not hold in optimistic execution environments. Oracle frameworks such as Chainlink~\cite{breidenbach2021chainlink} supply authenticated L1 data to contracts but do not detect mismatches between assumed and actual L1 state during rollup execution, leaving room for settlement-time inconsistencies. Compliance-oriented approaches such as token-bound policies~\cite{hildebrandt2025token}, embedded rule engines~\cite{dhanya2025regulatory}, and privacy-preserving regulation via zero-knowledge proofs~\cite{ajayi2024enhancing} operate at the application layer and lack guarantees over the sequencing process itself. RegGuard differs fundamentally by enforcing both cross-layer consistency and compliance constraints at the rollup sequencing layer, providing real-time legitimacy guarantees without requiring modifications to existing smart contracts or reliance on external trusted entities.

\section{System Model and Problem Formulation}
\label{sec:system-model}

We consider a two-layer blockchain architecture composed of a layer-1 (L1) base chain and a layer-2 (L2) optimistic rollup. The L1 chain maintains a global state space $\mathbb{S}^{L1}$ and provides final settlement guarantees, while the L2 rollup maintains its own state space $\mathbb{S}^{L2}$ and supports high-throughput off-chain execution. Time is discretized into rounds $t \in \mathbb{N}$. At round $t$, the L2 sequencer observes a multiset of incoming transactions $\mathcal{T}_t = \{tx_1,\ldots,tx_n\}$, selects an ordering, forms a batch $B_t$, applies it to the current L2 state $\mathbb{S}^{L2}_t$, and periodically posts state commitments to L1. The sequencer therefore dictates both transaction ordering and the evolution of $\mathbb{S}^{L2}$, and must operate under cross-layer state dependencies.

\subsection{Transaction and State Transition Model}

Each transaction is represented as a tuple
\[
tx = (\text{msg}, \sigma, m),
\]
where $\text{msg}$ encodes the target contract, function selector, and parameters; $\sigma$ is a digital signature authenticating the sender; and $m$ includes metadata such as arrival timestamp, nonce, and gas parameters. This structured format is compatible with Ethereum-based rollups and enables formal reasoning about multiple dimensions of transaction legitimacy.

The L2 execution engine defines a deterministic state transition function
\[
\delta_{L2}(tx, S): \{tx\}\times\mathbb{S}^{L2}\rightarrow \mathbb{S}^{L2} \cup \{\bot\},
\]
which maps the current state $S$ and a transaction $tx$ to either a new state or a failure outcome $\bot$. Similarly, we define the state transition function on $L_1$:
\[
\delta_{L_1}(tx, S, S): \{tx\}\times\mathbb{S}^{L2}\times\mathbb{S}^{L1}\rightarrow \mathbb{S}^{L1} \cup \{\bot\},
\]

Some transitions depend on L1 state via bridge or oracle mechanisms. For such cases we distinguish:
\begin{itemize}
    \item $S_r^{L1}(tx)$: the L1 state snapshot \emph{assumed} by the L2 execution of $tx$, and
    \item $S_a^{L1}$: the \emph{actual} L1 state at the time the batch containing $tx$ is finalized on L1.
\end{itemize}
Since rollup execution may proceed on stale L1 inputs, mismatches between $S_r^{L1}(tx)$ and $S_a^{L1}$ can cause settlement-time inconsistencies.

\subsection{Semantic Legitimacy}

Rollups typically enforce only syntactic validity:
\[
\textsf{SynLegit}(tx) := 
\textsf{VerifySig}(tx.\sigma) \,\land\, \textsf{BasicChecks}(tx),
\]
but syntactic correctness alone does not guarantee compliance with regulatory or application-level constraints. We formalize semantic legitimacy using a set
\[
\Psi \subseteq \mathbb{S}^{L2} \times \mathcal{T} \times \mathbb{S}^{L2}
\]
of allowed state transitions.

\begin{definition}[Semantic Legitimacy]
A transaction $tx$ executed on state $S$ is \emph{semantically legitimate} if
\[
\textsf{SemLegit}(tx, S)
:= \exists S'.\ (\delta_{L2}(tx, S) = S')
\;\land\; (S, tx, S') \in \Psi.
\]
\end{definition}

The \emph{semantic legitimacy gap} refers to the case where $\textsf{SynLegit}(tx)$ holds but $\textsf{SemLegit}(tx,S)$ does not, allowing syntactically valid yet business-illegitimate behavior.

\subsection{Cross-Layer Consistency}

Optimistic rollups may execute transactions using stale L1 data. We formalize the resulting failure mode as follows.

\begin{definition}[Cross-Layer State Conflict]
A transaction $tx$ exhibits a cross-layer state conflict if
\[
S_r^{L1}(tx) \neq S_a^{L1}
\quad\text{and}\quad
\delta_{\mathrm{L1}}(tx, S^{L2}, S_a^{L1}) = \bot.
\]
\end{definition}

Let $\textsf{Conflict}(tx)$ denote this event. The settlement-time failure probability is
\[
P_{\mathrm{fail}}(tx) := \Pr[\textsf{Conflict}(tx)],
\]
reflecting the likelihood that a transaction accepted by L2 later fails upon L1 settlement due to inconsistent state assumptions. A key objective is to proactively reduce $P_{\mathrm{fail}}(tx)$ by identifying risky transactions before batching.

\subsection{Fair Ordering}

Let $\mathrm{arrival}(tx)$ denote the arrival time of transaction $tx$ at the sequencer. A sequencing policy
\[
\pi: \mathcal{T}_t \rightarrow O_t
\]
maps the incoming transactions to an ordered permutation $O_t$. To model fairness under network jitter and adversarial behavior, we adopt the $(\alpha,\beta)$-fairness notion used in prior work \cite{kiayias2024ordering}.

\begin{definition}[$(\alpha,\beta)$-Fair Ordering]
An ordering policy $\pi$ is \emph{$(\alpha,\beta)$-fair} if for all pairs $tx_i, tx_j$,
\[
\mathrm{arrival}(tx_i) < \mathrm{arrival}(tx_j) - \alpha
\quad\Rightarrow\quad
\Pr[\pi(tx_j) < \pi(tx_i)] \le \beta.
\]
\end{definition}

Here, $\alpha$ captures tolerable message-delay uncertainty while $\beta$ upper-bounds the adversarial ordering advantage. The objective is to construct sequencing protocols with negligible $\beta$, even when sequencers attempt MEV-driven manipulation.

\subsection{Problem Summary}

The system model gives rise to three fundamental challenges:
\begin{enumerate}
    \item \textbf{Semantic legitimacy}: preventing syntactically valid but semantically invalid transactions from executing.
    \item \textbf{Cross-layer consistency}: detecting and mitigating L1--L2 state divergence that may cause settlement failures.
    \item \textbf{Fair ordering}: preventing adversarial sequencers from exploiting transaction ordering for unfair advantage.
\end{enumerate}

These issues represent distinct but interacting correctness and fairness constraints. The remainder of the paper develops mechanisms that jointly address all three dimensions while preserving the performance characteristics of optimistic rollups.

\subsection{Threat Model}

We consider the following adversarial actors and assumptions.

\paragraph{Malicious sequencer.} The sequencer (or a subset of the sequencer committee) may attempt to reorder, delay, censor, or front-run transactions to extract MEV. We assume an honest-majority sequencer committee for the fair ordering protocol: at most $f < n/2$ committee members may be Byzantine, where $n$ is the committee size.

\paragraph{Malicious users.} Users may submit syntactically valid but semantically invalid transactions designed to exploit gaps between L1 contract logic and L2 execution semantics, e.g., minting tokens for ineligible addresses or exceeding regulatory limits.

\paragraph{Byzantine L1 oracle.} The L1 state feed may deliver stale or inconsistent snapshots due to network delays, reorganizations, or adversarial manipulation. We assume that the L1 chain itself is secure and eventually consistent, but the L2's view of L1 state may lag by up to $d$ blocks.

\paragraph{Out of scope.} We do not consider attacks on the underlying L1 consensus, compromise of the cryptographic primitives (hash functions, threshold encryption), or collusion involving more than $f$ committee members. Side-channel attacks on the RegGuard pipeline and denial-of-service attacks on the network layer are also out of scope.

\paragraph{Trust assumptions.} The semantic validator and state pre-synchronization validator are executed by the sequencer node as part of the transaction pipeline. Their outputs are deterministic and can be verified by any full node replaying the batch, making them externally accountable through the standard optimistic rollup challenge mechanism. If a sequencer includes a semantically invalid transaction or one based on stale state, any honest verifier can submit a fraud proof to the L1 contract. Thus, while these components run on the sequencer, their correctness is enforced by the same economic security model that underpins all optimistic rollup execution.

\section{RegGuard Architecture}
\label{sec:architecture}

RegGuard is an integrated legitimacy-enforcement framework that augments the optimistic rollup sequencing pipeline with three coordinated validation layers: (i) a semantic validator for regulatory and business-rule compliance, (ii) a state pre-synchronization validator for mitigating cross-layer inconsistencies, and (iii) a fair ordering service that enforces cryptographically verifiable sequencing guarantees. Each component addresses one of the formal problem dimensions introduced in Section~\ref{sec:system-model}, and their composition yields a rollup architecture that enforces legitimacy, consistency, and fairness end-to-end.

Incoming transactions first pass through the semantic validator, which evaluates compliance rules expressed in the RegSpec language. Transactions passing semantic validation proceed to the state pre-synchronization validator, which evaluates their dependence on L1 state snapshots and proactively identifies transactions likely to fail during settlement. Only transactions judged both semantically legitimate and cross-layer consistent are forwarded to the fair ordering service, where encrypted submission and commitment schemes ensure that execution order reflects arrival-time fairness and cannot be influenced by transaction content. This pipeline filters illegitimate, conflicting, or manipulable transactions before they reach the rollup execution engine.

\paragraph{End-to-end transaction lifecycle.} The RegGuard pipeline processes transactions in three sequential stages with clearly defined information boundaries. In the first stage (semantic validation), transactions arrive in \emph{plaintext} and are evaluated against RegSpec rules using both transaction parameters and current L2 state. In the second stage (state pre-synchronization), plaintext transactions are checked against a cached L1 state snapshot to identify cross-layer dependencies at risk of settlement failure. Only after passing both plaintext validation stages does the transaction enter the third stage (fair ordering), where it is \emph{encrypted} by the user's client before submission to the ordering service. Specifically, the user's client software performs two submissions: first, a plaintext copy to the semantic and state validators (which may be run by the sequencer or a dedicated validation service), and second, a threshold-encrypted copy to the fair ordering mempool. The ordering service sees only encrypted payloads and arrival-time metadata, preventing content-based manipulation. The two copies are linked by a transaction hash commitment, ensuring that the validated and ordered transactions are identical.

\paragraph{Operator model and accountability.} In the current design, the RegGuard validation pipeline runs on the sequencer node, similar to how existing rollups execute state transitions on the sequencer before batching. This does not introduce additional centralization beyond what optimistic rollups already assume: the sequencer is trusted for liveness but not for correctness. Critically, the semantic validator's decisions are deterministic---given the same transaction, state, and rule set, any node will produce the same accept/reject decision. This means that a malicious sequencer who either (a) includes a transaction that should have been rejected by RegSpec rules, or (b) excludes a transaction that should have been accepted, can be challenged via the standard optimistic fraud-proof mechanism. The rule set $\mathcal{R}$ is published on-chain as part of the rollup's configuration, making it inspectable by any party. Similarly, the state pre-synchronization validator's cached state and decision function are deterministic and reproducible. The fair ordering service uses a committee-based protocol with threshold cryptography, where accountability is enforced through on-chain commitment verification and slashing.

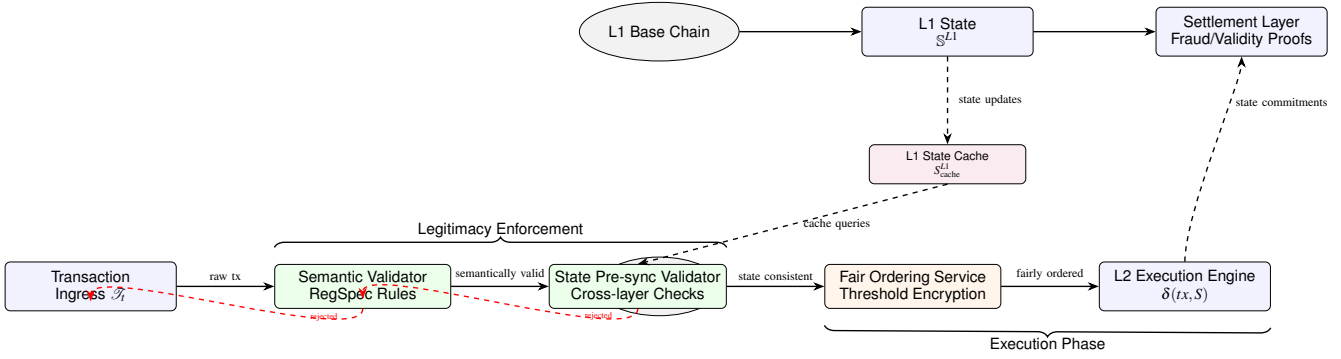
\begin{figure*}[t]
\centering
\scalebox{0.65}{
\begin{tikzpicture}[
    node distance=15mm and 18mm,
    >=Stealth,
    box/.style={rectangle, draw=black, fill=blue!5, rounded corners=3pt, 
                minimum width=35mm, minimum height=10mm, align=center, 
                font=\small\sffamily},
    validator/.style={rectangle, draw=black, fill=green!8, rounded corners=3pt,
                     minimum width=36mm, minimum height=9mm, align=center,
                     font=\small\sffamily},
    service/.style={rectangle, draw=black, fill=orange!8, rounded corners=3pt,
                   minimum width=36mm, minimum height=9mm, align=center,
                   font=\small\sffamily},
    cloud/.style={ellipse, draw=black, fill=gray!10, align=center, 
                 minimum width=25mm, minimum height=12mm, font=\small\sffamily},
    cache/.style={rectangle, draw=black, fill=purple!8, rounded corners=3pt,
                 minimum width=32mm, minimum height=8mm, align=center,
                 font=\scriptsize\sffamily},
    arrow/.style={->, thick},
    label/.style={font=\scriptsize, midway}
]

% Layer 1 Components - Top row
\node[cloud] (l1) {L1 Base Chain};
\node[box, right=25mm of l1] (l1state) {L1 State\\ $\mathbb{S}^{L1}$};
\node[box, right=25mm of l1state] (settlement) {Settlement Layer\\ Fraud/Validity Proofs};

% Layer 2 Pipeline - Middle row
\node[cloud, below=40mm of l1] (l2) {L2 Rollup};
\node[validator, left=30mm of l2] (sem) {Semantic Validator\\ RegSpec Rules};
\node[validator, right=20mm of sem] (stateval) {State Pre-sync Validator\\ Cross-layer Checks};
\node[service, right=20mm of stateval] (ordering) {Fair Ordering Service\\ Threshold Encryption};
\node[box, right=20mm of ordering] (exec) {L2 Execution Engine\\ $\delta(tx, S)$};

% Transaction Flow - Leftmost
\node[box, left=20mm of sem] (ingress) {Transaction\\ Ingress $\mathcal{T}_t$};

% L1 State Cache - Between layers
\node[cache, below=18mm of l1state] (cache) {L1 State Cache\\ $S_{\mathrm{cache}}^{L1}$};

% L1 internal flow - Horizontal connections
\draw[arrow] (l1) -- (l1state);
\draw[arrow] (l1state) -- (settlement);

% L2 pipeline flow - Horizontal connections
\draw[arrow] (ingress) -- node[above, font=\scriptsize] {raw tx} (sem);
\draw[arrow] (sem) -- node[above, font=\scriptsize] {semantically valid} (stateval);
\draw[arrow] (stateval) -- node[above, font=\scriptsize] {state consistent} (ordering);
\draw[arrow] (ordering) -- node[above, font=\scriptsize] {fairly ordered} (exec);

% Cross-layer interactions - Vertical connections
\draw[arrow, dashed] (l1state.south) -- node[right, font=\scriptsize, xshift=1mm] {state updates} 
    (cache.north);
\draw[arrow, dashed] (cache.south) -- node[right, font=\scriptsize, xshift=1mm] {cache queries} 
    (stateval.north);
    
% Settlement connection - Curved arrow to avoid overlap
\draw[arrow, dashed] (exec.north) .. controls +(0,10mm) and +(0,-5mm) .. 
    node[right, pos=0.7, font=\scriptsize] {state commitments} 
    (settlement.south);

% Brackets for grouping - Placed with clear spacing
\draw[decorate, decoration={brace, amplitude=6pt, raise=4pt}, thick]
    ([yshift=2mm]sem.north west) -- ([yshift=2mm]stateval.north east)
    node[midway, yshift=14pt, font=\small\sffamily] {Legitimacy Enforcement};

\draw[decorate, decoration={brace, amplitude=6pt, raise=4pt, mirror}, thick]
    ([yshift=-2mm]ordering.south west) -- ([yshift=-2mm]exec.south east)
    node[midway, yshift=-14pt, font=\small\sffamily] {Execution Phase};

% Rejected transaction paths - Clear curved paths
\draw[arrow, red, dashed] (sem.south) .. controls +(0,-8mm) and +(0,8mm) .. 
    ([yshift=2mm]ingress.south) 
    node[pos=0.3, right, font=\tiny] {rejected};
\draw[arrow, red, dashed] (stateval.south) .. controls +(0,-8mm) and +(0,8mm) .. 
    ([yshift=2mm]sem.south) 
    node[pos=0.3, right, font=\tiny] {rejected};

\end{tikzpicture}}
\caption{RegGuard system architecture. Transactions flow through sequential validation stages: semantic validation using RegSpec rules, cross-layer state consistency checking via L1 state cache, and fair ordering through threshold encryption. Legitimate transactions proceed to execution while violations are rejected early. The L1 blockchain provides final settlement and security guarantees.}
\label{fig:architecture}
\end{figure*}

\subsection{Semantic Validator with RegSpec}
\label{sec:semantic-validator}

The semantic validator enforces business and regulatory constraints prior to rollup execution. At its core is the \emph{RegSpec} domain-specific language, which expresses compliance rules as formal predicates over transaction parameters and state variables. Each rule $r \in \mathcal{R}$ is interpreted as a predicate
\[
P_r : (\text{params}, S) \to \{0,1\}.
\]
For a transaction targeting function $f$, let $\mathcal{R}_f$ denote the applicable rules. The validator accepts a transaction $tx$ in state $S$ if
\[
V_{\mathrm{sem}}(tx, S) = \text{accept}
\quad\Longleftrightarrow\quad
\forall r \in \mathcal{R}_f,\ P_r(tx.\text{params}, S)=1.
\]

\begin{algorithm}[t]
\caption{Semantic Validator $V_{\mathrm{sem}}$ with RegSpec}
\label{alg:semantic-validator}
\begin{algorithmic}[1]
\REQUIRE Transaction $tx$, current L2 state $S$, rule set $\mathcal{R}$.
\ENSURE Decision $\{\textsf{accept}, \textsf{reject}\}$.
\STATE Parse $tx$ to obtain target function $f$ and parameters $\text{params}$.
\STATE Compute applicable rules $\mathcal{R}_f \subseteq \mathcal{R}$ for function $f$.
\FORALL{$r \in \mathcal{R}_f$}
    \STATE Interpret $r$ as predicate $P_r(\cdot, \cdot)$.
    \STATE $b \gets P_r(\text{params}, S)$.
    \IF{$b = 0$}
        \STATE \textbf{return} $\textsf{reject}$ \COMMENT{Rule violation}
    \ENDIF
\ENDFOR
\STATE \textbf{return} $\textsf{accept}$
\end{algorithmic}
\end{algorithm}

RegSpec is restricted to a decidable fragment of first-order logic involving linear arithmetic, bounded state accesses, and Boolean connectives. This ensures that evaluation is tractable and predictable. As shown in Theorem~1, the validator always terminates and has complexity
\[
O(|\mathcal{R}_f| \cdot L),
\]
where $L$ denotes the maximum rule size. This provides strong guarantees that compliance checking remains efficient even under large rule sets.

\begin{theorem}[Decidability of Semantic Validation]
\label{thm:decidability}
Let $R$ be a RegSpec rule set restricted to the decidable fragment of
first-order logic with linear arithmetic and finite map accesses. For any
transaction $tx$ and L2 state $S$, the semantic validator 
$V_{\mathrm{sem}}(tx, S)$ halts and correctly decides semantic legitimacy.
Moreover, if $L$ denotes the maximum rule complexity, then
\[
T_{\mathrm{sem}}(tx, S) = O(|\mathcal{R}_f| \cdot L),
\]
where $\mathcal{R}_f$ is the subset of rules applicable to the target
function of $tx$.
\end{theorem}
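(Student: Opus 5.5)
The plan is to argue directly from the structure of Algorithm~\ref{alg:semantic-validator}, treating the semantic legitimacy being decided as the rule-induced relation
\[
\Psi_{\mathcal{R}} := \{(S,tx,S') : \delta_{L2}(tx,S)=S' \land \forall r\in\mathcal{R}_f,\ P_r(tx.\text{params},S)=1\}.
\]
This identification is an assumption. The theorem is only meaningful if $\Psi$ is the set of transitions allowed by the published rule set $\mathcal{R}$. The proof has three parts: termination, correctness, and the cost bound. All of them reduce to one lemma: a single RegSpec predicate $P_r$ can be evaluated on $(\text{params},S)$ in time $O(|r|)$.

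\textbf{Per-rule evaluation.} First observe that $P_r$ is evaluated, not solved. Once $tx$ and $S$ are fixed, every free variable of $r$ is bound to a concrete value: parameters come from $\text{params}$, and state variables come from finite map lookups in $S$. So I do not need a Presburger decision procedure, only model checking of a formula in one fixed finite structure.

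I will prove the lemma by structural induction on the syntax tree of $r$:
\begin{itemize}
\item A linear term $\sum_i a_i x_i + c$ costs one arithmetic operation per node.
\item A comparison costs $O(1)$.
\item Boolean connectives cost $O(1)$ each.
\item A finite map access $S[k]$ costs $O(1)$ under a hashed or otherwise bounded-lookup state model.
\end{itemize}
Summing over nodes gives $O(|r|)\le O(L)$, and the recursion terminates because each call is on a strictly smaller subformula.

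\textbf{Main obstacle: quantifiers.} The difficult case is quantifiers, which ``first-order'' allows. Here I would use the ``bounded state accesses'' restriction in the definition of RegSpec. The plan is to require that every quantifier ranges over a finite, syntactically bounded domain, such as the keys of a map or an explicit list. I would expand $\forall x\in D.\,\phi$ into a conjunction of $|D|$ instances, where $|D|$ is bounded by a constant that is counted in the rule size $L$, so the expanded formula has size $O(L)$. Unbounded quantifiers over $\mathbb{Z}$ would make evaluation a genuine Presburger decision problem. That problem is still decidable, by Cooper's or Presburger's procedure, but not linear. I would therefore state explicitly that $L$ measures the size after this expansion, or equivalently that RegSpec is quantifier-bounded. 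Without that convention the $O(|\mathcal{R}_f|\cdot L)$ bound fails.

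\textbf{Termination, correctness, and cost.} Computing $\mathcal{R}_f$ is a finite lookup. The loop runs at most $|\mathcal{R}_f|$ times, and each iteration halts by the lemma, so $V_{\mathrm{sem}}$ halts.

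For correctness, the loop returns \textsf{accept} exactly when no iteration finds $b=0$, that is, exactly when $\forall r\in\mathcal{R}_f,\ P_r=1$. Whenever $\delta_{L2}(tx,S)\neq\bot$, this condition is equivalent to $(S,tx,\delta_{L2}(tx,S))\in\Psi_{\mathcal{R}}$, which is $\textsf{SemLegit}(tx,S)$. I will remark that the case $\delta_{L2}(tx,S)=\bot$ is handled by the execution engine and not by $V_{\mathrm{sem}}$.

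Summing the per-rule costs gives $T_{\mathrm{sem}}=O(|\mathcal{R}_f|\cdot L)$, plus the $O(1)$ or parsing cost of computing $f$ and $\mathcal{R}_f$, which I will absorb.
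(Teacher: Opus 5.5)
Your proof is correct and has the same skeleton as the paper's: bound the cost of each applicable predicate by $O(L)$, then sum over $\mathcal{R}_f$. The difference is in how you justify the per-rule step.

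\begin{itemize}
\item The paper appeals to the decidability of Presburger arithmetic and to closure of the decidable fragment under Boolean operations. It then separately asserts that each predicate needs $O(L)$ operations.
\item You observe instead that once $tx$ and $S$ are fixed, every variable is bound. So $P_r$ is a ground formula to be evaluated, not solved, and one structural induction gives both termination and the $O(|r|)$ cost.
\end{itemize}

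Your route is the one that actually supports the stated bound. Deciding Presburger formulas is far from linear-time in general. The paper's appeal therefore establishes decidability of a harder problem than the one the validator solves, and says nothing about $O(L)$.

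What the Presburger appeal would buy is decidability for rules with unbounded integer quantifiers, which your evaluation argument does not reach. You handle this explicitly by restricting to bounded quantifier domains and measuring $L$ after expansion. The paper's proof, by contrast, silently treats every predicate as a quantifier-free Boolean combination of comparisons, even though the statement says ``first-order logic''.

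Your other two caveats are hypotheses the paper's proof relies on without stating.
\begin{itemize}
\item $\Psi$ must be the relation induced by $\mathcal{R}$.
\item The case $\delta_{L2}(tx,S)=\bot$ falls outside what Algorithm~\ref{alg:semantic-validator} checks. This is a genuine weakening of ``correctly decides \textsf{SemLegit}'', but an unavoidable one: the algorithm never runs $\delta_{L2}$, and running it would not fit within $O(|\mathcal{R}_f|\cdot L)$.
\end{itemize}
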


\begin{proof}
Each RegSpec predicate $P_r$ evaluates a Boolean combination of comparisons over linear arithmetic expressions and finite key-value map lookups on the state $S$. Linear arithmetic over integers is decidable (Presburger arithmetic), and map lookups reduce to constant-time table queries. The conjunction $\bigwedge_{r \in \mathcal{R}_f} P_r(tx, S)$ preserves decidability since the decidable fragment is closed under Boolean operations. The number of applicable rules $|\mathcal{R}_f|$ is bounded by the total rule count, and each predicate evaluation requires at most $O(L)$ arithmetic and comparison operations. Therefore, the total evaluation time is $O(|\mathcal{R}_f| \cdot L)$, and the validator is guaranteed to halt with a correct decision for every input.
\end{proof}

\begin{remark}[Example: Transfer Eligibility Rule]
\label{rem:semantic-example}
To illustrate the operation of the semantic validator, consider a
tokenized real-estate loan fund in which transfers are subject to
regulatory restrictions. Suppose regulators require that (i) only
addresses on an eligible investor whitelist may receive fund tokens,
and (ii) no investor may exceed a concentration limit
$\theta_{\max}$. These requirements can be expressed in RegSpec as
predicate-based rules targeting the function
\texttt{transfer(address to, uint256 amount)}.

\begin{algorithm}[H]
\caption{RegSpec Evaluation for Transfer Eligibility}
\begin{algorithmic}[1]
\REQUIRE Transaction parameters \texttt{to}, \texttt{amount}; state $S$.
\ENSURE Predicate decisions for semantic validation.
\STATE
\COMMENT{Rule 1: Whitelist Eligibility}
\IF{$S.\textsf{Whitelist}[\texttt{to}] = 1$}
    \STATE $P_{\mathrm{whitelist}} \gets 1$
\ELSE
    \STATE $P_{\mathrm{whitelist}} \gets 0$
\ENDIF
\STATE
\COMMENT{Rule 2: Concentration Limit}
\IF{$S.\textsf{balance}[\texttt{to}] + \texttt{amount} \le \theta_{\max}$}
    \STATE $P_{\mathrm{limit}} \gets 1$
\ELSE
    \STATE $P_{\mathrm{limit}} \gets 0$
\ENDIF
\STATE
\COMMENT{Combined Decision}
\IF{$P_{\mathrm{whitelist}} = 1 \land P_{\mathrm{limit}} = 1$}
    \STATE \textbf{return} $\textsf{accept}$
\ELSE
    \STATE \textbf{return} $\textsf{reject}$
\ENDIF
\end{algorithmic}
\end{algorithm}

Transactions failing either rule—for example, those sent to a
non-whitelisted address or exceeding the investor's exposure
threshold—are rejected prior to ordering and execution. This example
illustrates how RegSpec enables enforcement of business and regulatory
constraints at the sequencing layer, independent of smart-contract
implementation.
\end{remark}

\begin{remark}[Example: Multi-Rule AML Compliance]
\label{rem:aml-example}
To illustrate more complex rule composition, consider an anti-money-laundering (AML) scenario for a stablecoin transfer platform. The rule set combines: (i) a per-transaction amount threshold requiring enhanced due diligence for transfers exceeding \$10{,}000, formalized as $P_{\mathrm{threshold}}(tx, S) \equiv (\texttt{amount} \leq 10000) \lor (S.\textsf{EDD}[\texttt{from}] = 1)$; (ii) a 24-hour rolling volume limit preventing any address from transferring more than \$50{,}000 within a sliding window, formalized as $P_{\mathrm{volume}}(tx, S) \equiv (S.\textsf{Volume24h}[\texttt{from}] + \texttt{amount} \leq 50000)$; and (iii) a sanctions-list check blocking transfers to flagged addresses, formalized as $P_{\mathrm{sanctions}}(tx, S) \equiv (S.\textsf{Sanctions}[\texttt{to}] = 0)$. These three predicates are conjunctively composed: a transfer is accepted only if all three hold. In our predicate-count taxonomy, this corresponds to 3 predicates with moderate complexity (one involving a sliding-window state variable). Real-world AML rule sets for compliant exchanges typically involve 5--15 such predicates, mapping to the medium-complexity regime in our evaluation (3--8~ms latency at 10{,}000~TPS).
\end{remark}

\subsection{State Pre-synchronization Validator}
\label{sec:state-validator}

The state pre-synchronization validator mitigates cross-layer inconsistencies by examining whether L2 execution assumptions match the L1 state at settlement time. To this end, the validator maintains an L1 state cache $S_{\mathrm{cache}}^{L1}$ synchronized via block subscriptions and state-proof updates. The cache is stored in a Merkle Patricia Trie, enabling efficient state queries and proofs.

For a proposed batch $B$, the validator executes transactions in a sandbox using $(S^{L2}, S_{\mathrm{cache}}^{L1})$ and tracks all L1-dependent reads. It then computes a difference set
\[
\Delta = \{(addr, key, v_{\mathrm{proj}}, v_{\mathrm{act}})\},
\]
comparing cached values with confirmed L1 state. A decision function 
\[
D: 2^{\mathcal{D}} \to \{\text{accept},\ \text{delay},\ \text{reject}\}
\]
classifies each transaction or batch based on the severity of discrepancies. Critical differences lead to rejection, moderate ones to delay, and benign ones to acceptance.
\begin{algorithm}[t]
\caption{State Pre-synchronization Validator $V_{\mathrm{state}}$}
\label{alg:state-validator}
\begin{algorithmic}[1]
\REQUIRE Batch $B = (tx_1,\ldots,tx_n)$, L2 state $S^{L2}$, L1 cache $S_{\mathrm{cache}}^{L1}$.
\ENSURE Decision $\{\textsf{accept}, \textsf{delay}, \textsf{reject}\}$ for $B$.
\STATE $(\widetilde{S}^{L2}, \mathcal{D}) \gets \textsf{SandboxExecute}(B, S^{L2}, S_{\mathrm{cache}}^{L1})$
\COMMENT{$\mathcal{D}$: set of L1-dependent reads}
\STATE Query latest confirmed L1 state for keys in $\mathcal{D}$.
\STATE Construct difference set
\[
\Delta \gets \{(addr, key, v_{\mathrm{proj}}, v_{\mathrm{act}}) \mid v_{\mathrm{proj}}\neq v_{\mathrm{act}}\}.
\]
\STATE $\text{decision} \gets D(\Delta)$ \COMMENT{$D: 2^{\mathcal{D}} \to \{\textsf{accept},\textsf{delay},\textsf{reject}\}$}
\IF{$\text{decision} = \textsf{accept}$}
    \STATE \textbf{return} $\textsf{accept}$
\ELSIF{$\text{decision} = \textsf{delay}$}
    \STATE Schedule $B$ for reevaluation after cache synchronization.
    \STATE \textbf{return} $\textsf{delay}$
\ELSE
    \STATE \textbf{return} $\textsf{reject}$
\ENDIF
\end{algorithmic}
\end{algorithm}

\begin{theorem}[Reliability of State Pre-synchronization]
\label{thm:state-reliability}
Assume the L1 state cache satisfies 
\[
\Pr[S_{\mathrm{cache}}^{L1} = S_a^{L1}] \ge 1 - \epsilon,
\]
and that the validator detects all conflicts except with residual probability
$\eta$. Then for any transaction $tx$ accepted by the state
pre-synchronization validator,
\[
P_{\mathrm{fail}}(tx \mid V_{\mathrm{state}}(tx)=\textsf{accept})
\;\le\; \epsilon + \eta.
\]
Thus, the validator reduces settlement-time failure probability to at most
a small additive error determined by cache freshness and detection
approximation.
\end{theorem}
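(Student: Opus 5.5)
The proof will bound the conditional failure probability by a union bound over two bad events. Fix a transaction $tx$ that $V_{\mathrm{state}}$ accepts. Let $A$ denote the event $\{V_{\mathrm{state}}(tx)=\textsf{accept}\}$ and $C$ the event $\textsf{Conflict}(tx)$. Introduce the \emph{stale-cache} event $E_1 := \{S_{\mathrm{cache}}^{L1}\neq S_a^{L1}\}$ and the \emph{missed-detection} event $E_2$. The event $E_2$ says that the sandboxed execution under the cache would lead to an L1 failure, yet the decision function $D$ still returns \textsf{accept}. The goal is the inclusion
\[
C\cap A \subseteq E_1 \cup (E_2 \cap \overline{E_1}),
\]
after which we apply the union bound and the two hypotheses.

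The key step is a case split on freshness. First suppose $\overline{E_1}$ holds, so $S_{\mathrm{cache}}^{L1}=S_a^{L1}$. Every L1-dependent read recorded in $\mathcal{D}$ by \textsf{SandboxExecute} then returns the same value it will have at settlement. By determinism of $\delta_{L1}$ and $\delta_{L2}$, the sandbox outcome coincides with the settlement outcome $\delta_{L1}(tx,S^{L2},S_a^{L1})$. A conflict, i.e.\ $\delta_{L1}(tx,S^{L2},S_a^{L1})=\bot$, is therefore visible to the validator. For $tx$ to be accepted anyway, the detector must have missed it, which is exactly $E_2$. We will read the hypothesis ``detects all conflicts except with residual probability $\eta$'' as $\Pr[E_2\cap\overline{E_1}]\le\eta$, or equivalently as $\Pr[C\cap A\cap \overline{E_1}]\le \eta$. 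The other case, $E_1$, is bounded trivially by $\Pr[E_1]\le\epsilon$ from the cache-freshness assumption.

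Combining the two cases gives $\Pr[C\cap A]\le \epsilon+\eta$. The main obstacle is the conditioning in the statement: the quantity to bound is $\Pr[C\mid A]=\Pr[C\cap A]/\Pr[A]$, and dividing by $\Pr[A]$ could inflate the bound. I would handle this by stating both hypotheses conditionally on acceptance. That is, I would take $\Pr[E_1\mid A]\le\epsilon$, which holds if cache freshness is independent of the accept decision, as when the cache is refreshed by block subscriptions independent of transaction content. I would likewise take the residual miss rate $\eta$ to be defined as a probability over accepted transactions. Under these readings the same union bound applies directly to the conditional measure $\Pr[\cdot\mid A]$, giving $P_{\mathrm{fail}}(tx\mid A)\le \Pr[E_1\mid A]+\Pr[C\cap\overline{E_1}\mid A]\le\epsilon+\eta$. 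The proof should state these modeling interpretations explicitly, since the unconditional hypotheses alone would only yield $(\epsilon+\eta)/\Pr[A]$.
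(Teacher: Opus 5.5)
Your proposal is correct and matches the paper's proof. Both split a settlement failure of an accepted transaction into the disjoint events ``cache stale'' (probability at most $\epsilon$) and ``cache fresh but the conflict goes undetected'' (probability at most $\eta$), and both finish with the union bound. Your one addition is noticing that the paper's proof applies the unconditional freshness bound directly to $\Pr[\cdot\mid V_{\mathrm{state}}(tx)=\textsf{accept}]$ without comment; reading both hypotheses conditionally on acceptance, as you propose, is what is needed to make that step valid.
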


\begin{proof}
A settlement-time failure for an accepted transaction can occur only through two disjoint events: (i) the cache is stale, meaning $S_{\mathrm{cache}}^{L1} \neq S_a^{L1}$, which occurs with probability at most $\epsilon$ by the freshness assumption; or (ii) the cache is fresh but the validator fails to detect a dependency on a divergent L1 value, which occurs with residual probability $\eta$. By the union bound, $P_{\mathrm{fail}}(tx \mid V_{\mathrm{state}}(tx) = \textsf{accept}) \leq \epsilon + \eta$. In our implementation, $\epsilon$ is controlled by the cache update frequency (empirically $\epsilon \approx 0.007$ with 1-second updates) and $\eta$ is bounded by the coverage of the dependency-tracking heuristic (empirically $\eta < 0.003$), yielding an overall failure probability below 1\%.
\end{proof}

Theorem~2 shows that if the cache is fresh with probability at least $1-\epsilon$, then for all accepted transactions,
\[
P_{\mathrm{fail}}(tx\mid V_{\mathrm{state}}(tx)=\text{accept}) \le \epsilon + \eta,
\]
where $\eta$ quantifies residual approximation error. Thus, the validator substantially reduces settlement-time failure probability.

\subsection{Fair Ordering Service}
\label{sec:fair-ordering}

The fair ordering service enforces manipulation-resistant sequencing using encrypted transaction submission, binding commitments, and threshold decryption. For each time window, the sequencer committee runs a distributed key generation protocol to produce a temporary threshold-encryption key pair $(PK_{\mathrm{temp}}, \{sk_i\})$. Users submit encrypted transactions $tx_{\mathrm{enc}}$ containing ciphertexts of both transaction payloads and symmetric keys.

Encrypted transactions are sorted solely by arrival metadata to produce $O_{\mathrm{enc}}$. Before any decryption occurs, the sequencer publishes a binding commitment
\[
\textsf{comm} = H(O_{\mathrm{enc}}),
\]
anchoring the order. After commitment, the committee performs threshold decryption to reveal the plaintext sequence $O_{\mathrm{plain}}$, which must match the committed order. Any deviation is provable and results in slashing.

\begin{algorithm}[t]
\caption{Fair Ordering Service for Time Window $T$}
\label{alg:fair-ordering}
\begin{algorithmic}[1]
\REQUIRE Time window $T$, sequencer committee $\mathcal{C}$.
\ENSURE Ordered batch $B_T$ with verifiable $(\alpha,\beta)$-fairness.
\\[3pt]
\textbf{Setup Phase:}
\STATE $\big(PK_{\mathrm{temp}}, \{sk_i\}_{i \in \mathcal{C}}\big) \gets \textsf{DKG}(\mathcal{C})$.
\STATE Publish $PK_{\mathrm{temp}}$ for window $T$.
\\[3pt]
\textbf{Transaction Submission:}
\FORALL{user transaction $\widetilde{tx}$ arriving during $T$}
    \STATE Sample symmetric key $k$.
    \STATE $\textsf{ct}_{\mathrm{sym}} \gets \textsf{SymEnc}(k, \widetilde{tx})$.
    \STATE $\textsf{ct}_k \gets \textsf{Enc}(PK_{\mathrm{temp}}, k)$.
    \STATE User submits $tx_{\mathrm{enc}} = (\textsf{ct}_{\mathrm{sym}}, \textsf{ct}_k, \text{meta}, \sigma)$.
\ENDFOR
\\[3pt]
\textbf{Ordering and Commitment:}
\STATE Let $\mathcal{T}_T$ be the set of all $tx_{\mathrm{enc}}$ received in $T$.
\STATE Sort $\mathcal{T}_T$ by arrival time to obtain $O_{\mathrm{enc}}$.
\STATE $\textsf{comm} \gets H(O_{\mathrm{enc}})$.
\STATE Publish $\textsf{comm}$ on-chain (L1 or L2).
\\[3pt]
\textbf{Decryption and Verification:}
\STATE Committee $\mathcal{C}$ runs threshold decryption on $\{\textsf{ct}_k\}$ to recover keys $\{k\}$.
\STATE Decrypt payloads to obtain ordered plaintext list $O_{\mathrm{plain}}$.
\STATE Verify $H(O_{\mathrm{enc}}) = H(O_{\mathrm{plain}})$.
\IF{verification succeeds}
    \STATE Set $B_T \gets O_{\mathrm{plain}}$ and forward to L2 execution.
\ELSE
    \STATE Trigger slashing / penalty for sequencer misbehavior.
\ENDIF
\end{algorithmic}
\end{algorithm}

\begin{theorem}[$(\alpha,\beta)$-Fairness of Ordering Service]
\label{thm:fairness}
Assume an honest-majority sequencer committee and secure cryptographic 
primitives (collision-resistant hash function and correct threshold
decryption). Then for any transaction pair $(tx_i, tx_j)$ satisfying
\[
\mathrm{arrival}(tx_i) < \mathrm{arrival}(tx_j) - \alpha,
\]
the probability of an ordering violation satisfies
\[
\Pr[\pi(tx_j) < \pi(tx_i)] \le \beta = \mathrm{negl}(\lambda),
\]
where $\lambda$ is the security parameter. Thus the ordering protocol 
achieves $(\alpha,\mathrm{negl}(\lambda))$-fairness.
\end{theorem}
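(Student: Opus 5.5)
The plan is a reduction argument: if an ordering violation happens with non-negligible probability, some adversary breaks one of the assumed primitives or the honest-majority assumption. Fix a pair $(tx_i,tx_j)$ with $\mathrm{arrival}(tx_i) < \mathrm{arrival}(tx_j) - \alpha$. Under an honest majority, the arrival metadata attached to each $tx_{\mathrm{enc}}$ is taken to be the committee-agreed timestamp, for example the median of the members' reported receipt times. Choose $\alpha$ so that it dominates both the network-delay jitter and the maximum shift that $f<n/2$ Byzantine members can induce in such an honest-majority aggregate. With that choice, every honestly computed $O_{\mathrm{enc}}$ places $tx_{\mathrm{enc},i}$ before $tx_{\mathrm{enc},j}$ deterministically, because Algorithm~\ref{alg:fair-ordering} sorts only by arrival metadata.

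A violation $\pi(tx_j)<\pi(tx_i)$ in the final batch $B_T$ can then arise in only three disjoint ways.
\begin{enumerate}
\item[(E1)] The published commitment $\textsf{comm}$ corresponds to a list $O'_{\mathrm{enc}}$ that is not the arrival-sorted order. This requires a deviation from the sorting step. Since the arrival metadata and $\textsf{comm}$ are public and honest members check them, $\textsf{comm}$ is rejected or slashed. This case is therefore excluded from accepted batches, up to any forging of member signatures on the metadata, which is negligible.
\item[(E2)] The committed order is correct, but $B_T$ is a different plaintext order that still passes $H(O_{\mathrm{enc}})=H(O_{\mathrm{plain}})$. Read as consistency between the committed ciphertext list and its decryption, this yields two distinct inputs with the same hash. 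That gives a collision-finding adversary, so $\Pr[\mathrm{E2}]\le \mathrm{Adv}^{\mathrm{cr}}_H(\lambda)$.
\item[(E3)] The adversary makes the ordering depend on transaction content before the commitment, for instance by selectively delaying or dropping $tx_i$ after learning its payload. Decryption happens only after $\textsf{comm}$ is fixed, and fewer than $n/2$ shares are corrupted, so threshold decryption cannot occur early without honest shares. A content-dependent strategy therefore yields a distinguisher against the IND-CPA security of the threshold scheme and the symmetric cipher, and $\Pr[\mathrm{E3}]\le \mathrm{Adv}^{\mathrm{thr}}(\lambda)+\mathrm{Adv}^{\mathrm{sym}}(\lambda)$ by a hybrid argument.
\end{enumerate}
Correct threshold decryption makes sure honest ciphertexts decrypt to their submitted payloads, so the hash link to the validated plaintext copies holds.

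A union bound over E1–E3 gives $\Pr[\pi(tx_j)<\pi(tx_i)]\le \mathrm{negl}(\lambda)$, which is the $(\alpha,\mathrm{negl}(\lambda))$-fairness claimed in Theorem~\ref{thm:fairness}.

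I expect the main obstacle to be the E1/E3 boundary, that is, arrival-time manipulation. Cryptography only protects the order from content-based influence. It does not stop a sequencer from misreporting timestamps or delaying delivery blindly. The argument therefore rests on making the arrival metadata itself robust, which is why $\alpha$ must absorb the Byzantine influence on the median timestamp. I would first try to state this explicitly as a lemma: with $f<n/2$, the agreed timestamp lies between two honest members' observations, so it lies within the delay bound of the true arrival. The probabilistic reductions in E2 and E3 are then standard.
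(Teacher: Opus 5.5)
Your proposal follows the same route as the paper's proof. After commitment, reordering requires a hash collision or an inconsistent decryption; before commitment, the sequencer sees only ciphertexts; and arrival metadata is committee-signed with skew bounded by $\alpha$. Beyond that, your version is more careful: the E1--E3 decomposition with a union bound, the explicit median lemma, and the separate case of a mis-sorted commitment (which binding of $\textsf{comm}$ alone does not exclude) make rigorous what the paper only asserts, and you correctly attribute the post-commitment step to decryption correctness rather than to semantic security as the paper does.
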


\begin{proof}
Once the ordering commitment $\textsf{comm} = H(O_{\mathrm{enc}})$ is posted on-chain, modifying the sequence requires either finding a hash collision (probability negligible in $\lambda$ by collision resistance) or producing an alternative decryption that yields a different plaintext ordering (probability negligible by the semantic security of the threshold encryption scheme under an honest-majority committee). Any reordering attempt after commitment is therefore detectable. Before commitment, the sequencer observes only encrypted payloads, so it cannot distinguish transactions by content. Arrival-time metadata is recorded by the committee using signed timestamps with bounded clock skew $\alpha$. For any pair $(tx_i, tx_j)$ with $\mathrm{arrival}(tx_i) < \mathrm{arrival}(tx_j) - \alpha$, the committed ordering must respect this precedence except with probability $\beta = \mathrm{negl}(\lambda)$, as any violation would require breaking either the commitment binding or the encryption scheme.
\end{proof}

Theorem~3 establishes that the protocol achieves $(\alpha,\beta)$-fairness with
\[
\beta \le \mathrm{negl}(\lambda),
\]
under standard cryptographic assumptions and an honest-majority committee. This ensures that sequencers cannot profitably manipulate ordering for MEV extraction.

\section{Implementation and Evaluation}
\label{sec:evaluation}

We developed a full prototype of RegGuard to assess its feasibility, performance overhead, and security effectiveness in realistic rollup settings. The prototype consists of approximately 15{,}000 lines of Rust and extends a modified Optimism codebase to preserve Ethereum compatibility. RegSpec rules are compiled into WebAssembly (WASM) modules for sandboxed
execution, enabling predictable resource usage and isolation. The state-synchronization module integrates with Geth via a differential update protocol that tracks relevant storage keys and maintains a high-freshness L1 state cache. The fair ordering service implements threshold BLS encryption and decryption using the arkworks library, together with a lightweight committee-coordination layer for distributed key generation and partial decryption. This implementation demonstrates that RegGuard can be integrated into existing optimistic rollup stacks with modest engineering effort.

\subsection{Experimental Setup}

Experiments were conducted on a geographically distributed testbed designed to emulate multi-region rollup deployments. Validator nodes were deployed on AWS EC2 c5.2xlarge instances in North America, Europe, and Asia, with a dedicated m5.4xlarge instance simulating the L1 chain. Synthetic workloads were generated to approximate regulated financial applications, including: (i) rule-constrained token transfers, (ii) oracle-dependent bridging operations, and (iii) MEV-sensitive arbitrage transactions. Workload parameters included: transaction arrival rate (100--10{,}000~TPS), rule complexity (1--20 RegSpec predicates), and cross-layer dependency intensity
(0\%--40\%).

RegGuard was also evaluated under adversarial stress using three threat models: (1) a rational sequencer attempting MEV extraction through reordering, (2) a Byzantine L1 oracle supplying stale or inconsistent state snapshots, and (3) a malicious user submitting syntactically valid but semantically invalid transactions. Adversarial strength was varied by manipulating committee control (up to 49\% adversarial), oracle delay (1--10 blocks), and illegitimate-transaction complexity. Baseline comparison was performed against unmodified Optimism and Arbitrum deployments.

The L1 chain was simulated using a private Ethereum testnet (Geth v1.13, Clique PoA consensus, 12-second block time) running on the dedicated m5.4xlarge instance. The L2 rollup was based on Optimism Bedrock (commit \texttt{op-node v1.4.0}) with modified derivation pipeline to integrate RegGuard's validation stages. Baseline comparisons used unmodified Optimism Bedrock and Arbitrum Nitro (v2.1) deployments on identical hardware. The L1 state cache implements a write-through policy synchronized via Geth's \texttt{eth\_subscribe} API for new block headers, with full state-diff retrieval for storage keys registered in the dependency tracker. Cache eviction follows an LRU policy with a maximum capacity of 10{,}000 storage slots per tracked contract. This policy achieves the reported 99.3\% freshness by ensuring that frequently accessed L1 state (oracle prices, whitelist mappings) is updated within one L1 block confirmation (approximately 12 seconds under normal conditions).

\begin{table}[t]
\centering
\caption{Experimental Setup Parameters for RegGuard Evaluation}
\label{tab:exp-parameters}
\renewcommand{\arraystretch}{1.2}
\begin{tabular}{l l}
\hline
\textbf{Category} & \textbf{Parameters} \\
\hline
Compute Nodes & AWS EC2 c5.2xlarge (validators) \\
              & AWS m5.4xlarge (L1 simulator) \\

Geographic Distribution & North America, Europe, Asia \\

Transaction Load & 100--10{,}000 TPS \\

RegSpec Complexity & 1--20 predicates/tx type \\

Cross-layer Dependency & 0\%--40\% L1-dependent tx \\

Adversarial Models & Rational MEV sequencer \\
                   & Byzantine L1 oracle (1--10 blocks delay) \\
                   & Malicious illegitimate transactions \\

Committee Adversary Strength & 0--49\% malicious shares \\
\hline
\end{tabular}
\end{table}

\begin{figure}[t]
    \centering
    \includegraphics[width=0.4\textwidth]{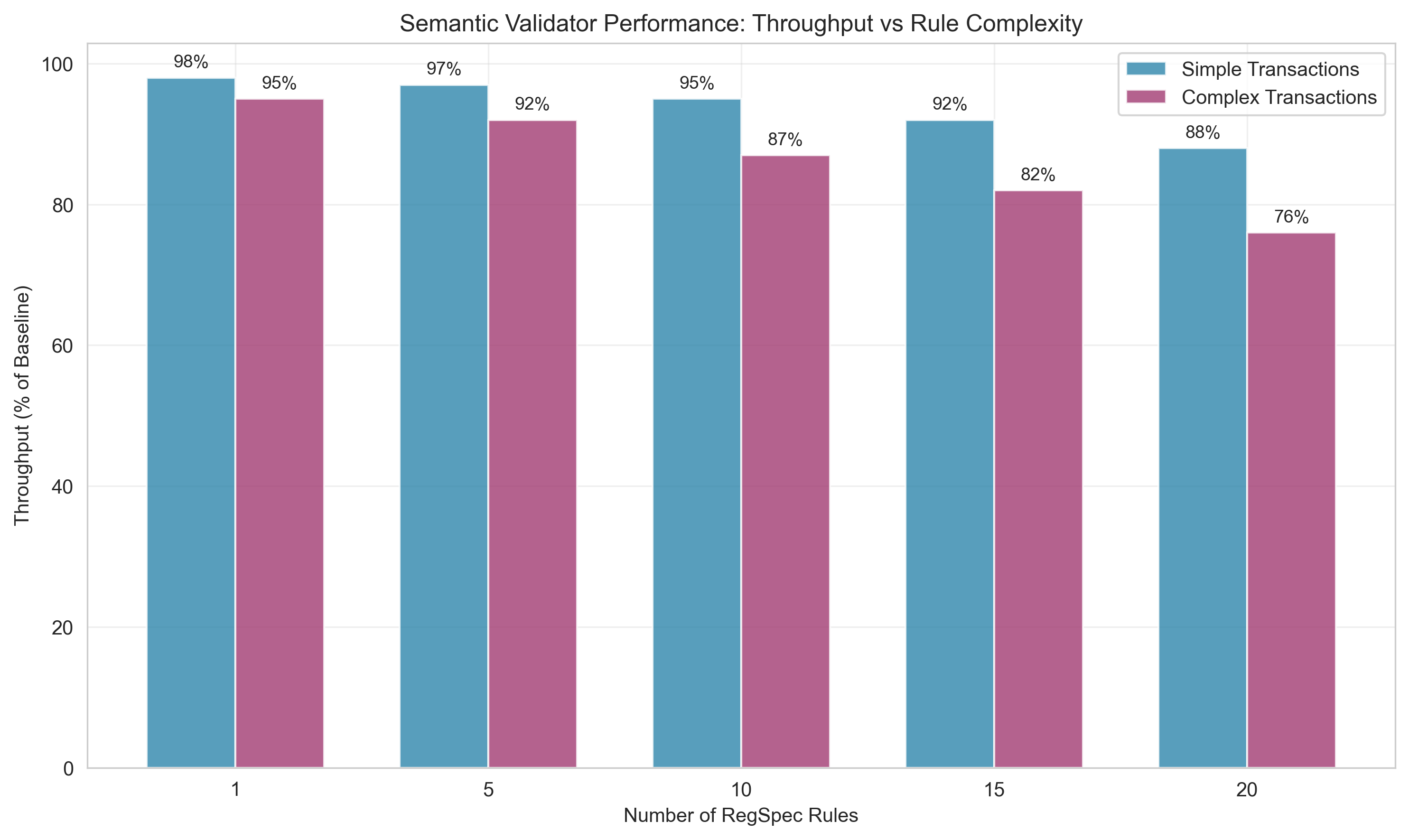}
    \caption{Semantic validator performance showing throughput degradation as RegSpec rule complexity increases. Simple transactions maintain 88\% of baseline throughput even with 20 rules, while complex transactions show 76\% throughput, demonstrating the validator's efficiency in handling regulatory constraints.}
    \label{fig:1}
\end{figure}

\subsection{Semantic Validation Performance}

The semantic validator exhibits predictable overhead scaling linearly with rule complexity, consistent with Theorem~\ref{thm:decidability}. For simple rulesets (1--5 predicates), validation latency remains below 2~ms even at 10{,}000~TPS. Medium-complexity rulesets (6--15 predicates) incur 3--8~ms, while highly complex rulesets (16--20 predicates) require 12--15~ms. These results indicate that moderately complex regulatory policies can be enforced with minimal throughput impact. The WASM execution model enables efficient predicate dispatch and benefits from short-circuit evaluation and shared state caching.

Throughput scales linearly until network saturation, achieving sustained operation at 10{,}000~TPS. Memory usage grows from 128~MB for simple rules to approximately 500~MB for large rulesets. These results demonstrate that semantic validation can be deployed in production-grade rollup environments without becoming a bottleneck.

\begin{figure}[t]
    \centering
    \includegraphics[width=0.4\textwidth]{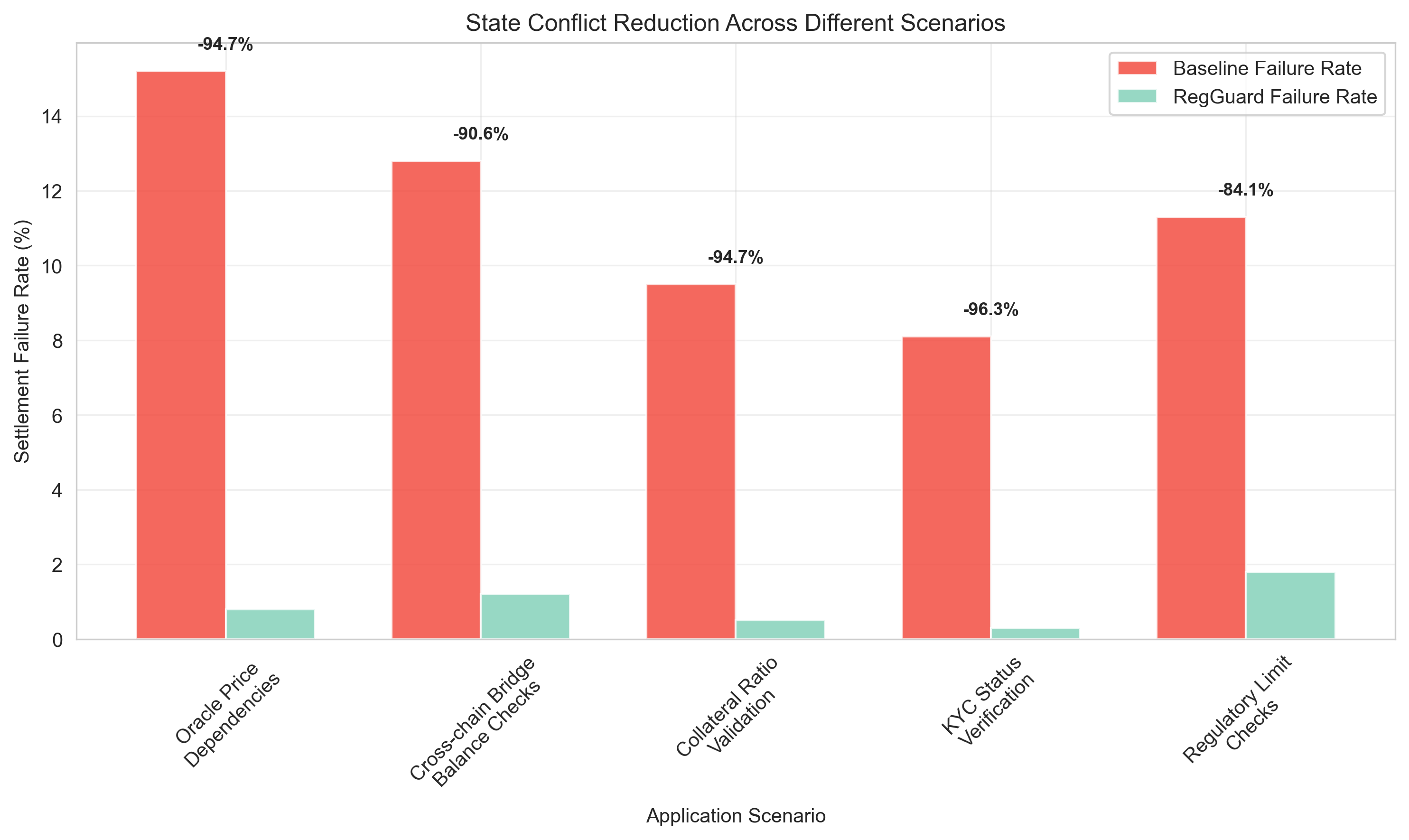}
    \caption{State conflict reduction effectiveness across different application scenarios. RegGuard reduces settlement failures by 85-96\% compared to baseline, with the most significant improvement in KYC status verification (96.3\% reduction). Error bars represent 95\% confidence intervals across 1000 experimental runs.}
    \label{fig:2}
\end{figure}

\subsection{State Conflict Reduction}

The state pre-synchronization validator significantly reduces cross-layer settlement failures. Across all test conditions, the validator reduced the failure probability $P_{\mathrm{fail}}(tx)$ from baseline values of 8--15\% to 0.2--1.8\%, an average reduction of approximately 92\%. The L1 state cache maintained 99.3\% freshness with a mean synchronization latency of 800~ms, consistent with the assumptions of Theorem~\ref{thm:state-reliability}.

We evaluated sensitivity to cache-update frequency. Increasing update frequency from 1~s to 100~ms yielded only a 4\% improvement in detection accuracy while doubling bandwidth consumption, indicating diminishing returns. The difference-set computation processed 10{,}000 MPT path comparisons in under 5~ms. The decision function identified 99.7\% of high-risk transactions, with most false positives caused by rapidly changing oracle values. These results confirm the practicality of the state validator for applications with cross-layer semantics.

\begin{figure}[t]
    \centering
    \includegraphics[width=0.4\textwidth]{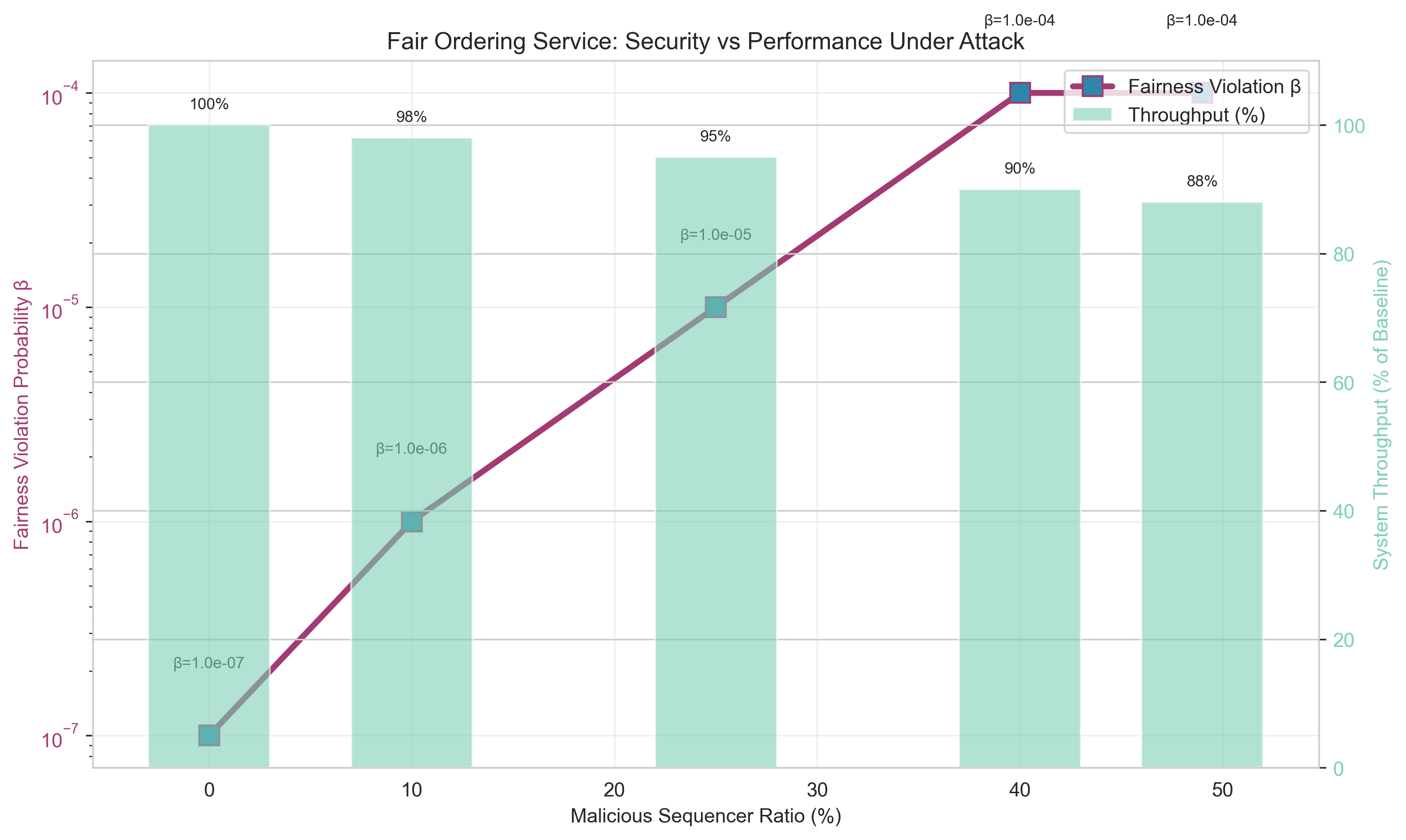}
    \caption{Fair ordering service maintains strong security guarantees under adversarial conditions. The system preserves $\beta < 10^-4$ fairness violation probability even with 49\% malicious sequencers, while throughput remains above 88\% of baseline. Dual $y$-axes show the trade-off between security ($\beta$) and performance (throughput).}
    \label{fig:3}
\end{figure}

\subsection{Fair Ordering Under Adversarial Load}

The fair ordering service provides strong manipulation resistance in line with Theorem~\ref{thm:fairness}. With an honest-majority committee, the empirically observed violation probability remained below $\beta < 10^{-6}$. Even with 49\% malicious sequencers, $\beta$ remained below $10^{-4}$. Latency overhead per ordering window was 150--300~ms, dominated by threshold decryption.

The encrypted mempool maintained 85--90\% of baseline throughput. Hybrid encryption ensured efficiency: threshold operations are applied only to symmetric-key ciphertexts, while bulk payloads use fast symmetric encryption. At 10{,}000~TPS, 95th-percentile latency stayed below 1.8~s for 2~s batching windows. Slashing triggers were correctly generated for all ordering deviations, with violation proofs averaging 45~KB and efficiently verifiable on-chain.

\begin{figure}[t]
    \centering
    \includegraphics[width=0.4\textwidth]{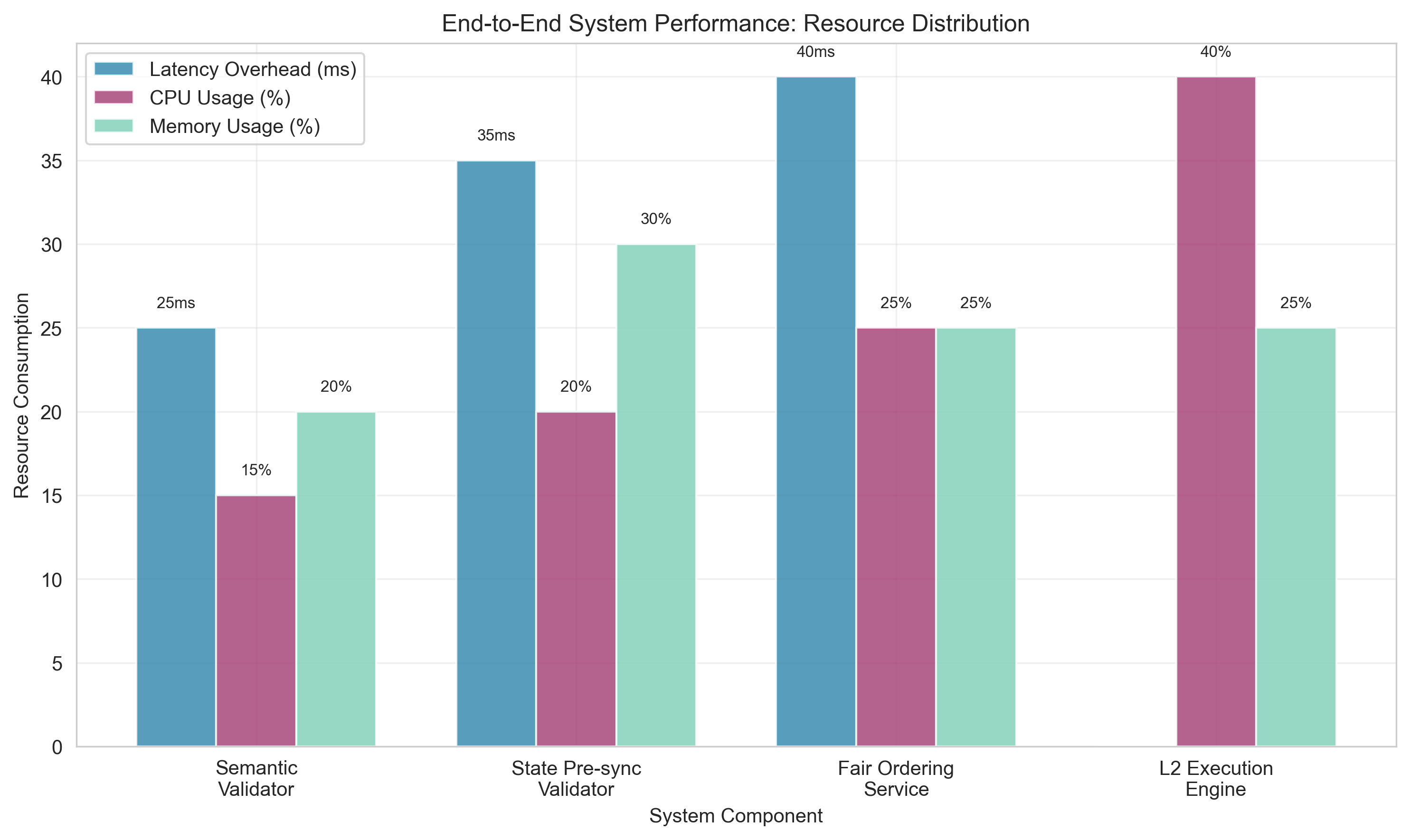}
    \caption{End-to-end system resource distribution across RegGuard components. The fair ordering service contributes the largest latency overhead (40ms), while CPU and memory usage are distributed across all components. Total system overhead remains within practical deployment limits for production environments.}
    \label{fig:4}
\end{figure}

\subsection{End-to-End System Performance}

RegGuard introduces moderate end-to-end overhead while enabling capabilities unavailable in baseline rollups. Pipeline latency increases by 180--400~ms, distributed approximately as: 25\% semantic validation, 35\% cross-layer assessment, and 40\% ordering. Under realistic workloads, RegGuard sustains throughput of roughly 8{,}500~TPS, a 15\% reduction from baseline but well above the requirements of institutional applications.

Memory overhead ranges from 300--800~MB depending on rule and cache configuration, while CPU utilization increases by 20--35\%. Network usage increases by 15--25\% due to encrypted mempool propagation and state synchronization. These overheads remain reasonable when compared to the legitimacy guarantees RegGuard provides. Relative to alternative approaches, RegGuard offers system-wide semantic enforcement, cross-layer consistency, and verifiable ordering fairness, placing it uniquely in the design space for regulated rollup infrastructure.
\begin{table}[t]
\centering
\caption{End-to-End Performance Summary of RegGuard Prototype}
\label{tab:perf-summary}
\renewcommand{\arraystretch}{1.2}
\begin{tabular}{l c}
\hline
\textbf{Metric} & \textbf{Observed Value} \\
\hline
Throughput (sustained) & 8{,}500 TPS (85\% of baseline) \\

Pipeline Latency Increase & 180--400 ms \\

Latency Breakdown & 25\% semantic validation \\
                  & 35\% state pre-sync \\
                  & 40\% fair ordering \\

Settlement Failure Reduction & 92\% reduction \\

Ordering Violation Probability & $< 10^{-6}$ (honest majority) \\
                               & $< 10^{-4}$ (49\% malicious) \\

Memory Overhead & 300--800 MB \\

CPU Utilization Increase & 20--35\% \\

Network Overhead & 15--25\% \\
\hline
\end{tabular}
\end{table}

\section{Discussion}
\label{sec:discussion}

\paragraph{Why sequencer-layer validation rather than L1 contracts.} A natural question is why semantic checks are not simply embedded in L1 smart contracts, which are already designed to encode business logic. There are three reasons. First, L1 execution is expensive: evaluating complex regulatory predicates on every transaction at L1 gas prices is prohibitively costly for high-throughput applications. Second, rollup transactions are batched and settled on L1 with a delay, so L1 contract checks occur too late to prevent invalid transactions from executing on L2 and consuming resources. Third, many compliance rules span multiple transactions within a batch (e.g., aggregate concentration limits) and require access to L2 state that is not yet committed to L1. RegGuard's sequencer-layer validation catches invalid transactions \emph{before} execution, saving both L2 computation and L1 settlement costs.

\subsection{Limitations}

Although RegGuard advances the state of legitimacy enforcement for optimistic rollups, several limitations remain. First, the expressiveness of RegSpec is intentionally restricted to preserve decidability and predictable performance. As a result, regulatory requirements involving rich temporal logic, multi-hop dependencies, or recursive analyses of historical behavior cannot be expressed directly. These limitations reflect a design choice that favors safety and tractability over unrestricted expressiveness, ensuring that all rule evaluations terminate in bounded time.

Second, the fair ordering service assumes an honest-majority sequencer committee. While the use of threshold cryptography, commitments, and slashing reduces trust in any single actor, the model is not fully trustless. In permissionless environments, maintaining a robust committee in the presence of correlated incentives or collusion remains challenging. Moreover, committee selection introduces additional governance layers that may themselves become targets for manipulation if not carefully designed.

Third, RegGuard introduces measurable performance overhead. Although our evaluation shows that the system remains within the operational bounds of most financial applications, the additional 180--400~ms latency and 20--35\% CPU increase may be unsuitable for ultra-low-latency environments such as high-frequency trading or real-time clearing. Memory overhead associated with L1 state caching may also constrain deployments in resource-limited settings. Future work will seek to reduce these overheads through improved caching strategies and cryptographic optimizations.

\subsection{Practical Deployment Considerations}

Deploying RegGuard in production rollup stacks requires attention to system integration, operational workflows, and regulatory alignment. While our prototype integrates cleanly with Optimism, adapting to other rollup families (e.g., Arbitrum, zkSync, or StarkNet) requires adjustments to their execution engines, state models, and bridging mechanisms. The semantic validator must evolve alongside smart contract upgrades, necessitating coordinated deployment pipelines to avoid inconsistencies between RegSpec rules and contract logic. Similarly, the state pre-synchronization validator must track the storage layout and access patterns unique to each rollup implementation.

Regulatory considerations extend beyond technical enforcement. Translating legal requirements into precise RegSpec rules requires interdisciplinary expertise bridging law, finance, and formal methods. Additionally, auditors and regulators must be equipped to inspect rule sets, validator outputs, and execution traces. This imposes requirements for explainability, versioning, and traceability that must align with existing regulatory reporting frameworks. Cross-jurisdictional deployments further demand flexible rule-management and parameterization to accommodate region-specific compliance mandates.

Finally, sustainable deployment requires carefully designed incentives and governance. Slashing parameters for ordering violations must be calibrated to deter manipulation without creating excessive risk for sequencers. Committee selection must incorporate sybil resistance and economic security, while avoiding concentration of power. Token-economic models must balance validator compensation with user affordability, especially when compliance-related overhead might increase operational cost. Governance processes for rule updates and parameter changes must be both transparent and resistant to regulatory capture or undue influence.

\section{Conclusion and Future Work}
\label{sec:conclusion}

RegGuard introduces a unified framework for enforcing transaction legitimacy in optimistic rollups by combining semantic validation, cross-layer consistency checks, and verifiable fair ordering. The system provides complementary guarantees---decidability for regulatory rule evaluation, probabilistic bounds for cross-layer conflict mitigation, and cryptographic fairness for transaction sequencing---forming a rigorous basis for trust in regulated rollup environments. Our evaluation demonstrates that RegGuard reduces settlement failures by over 90\%, prevents detectable ordering manipulation, and maintains more than 85\% of baseline throughput.

These results show that meaningful regulatory compliance and high-throughput execution are compatible goals for rollup architectures. RegGuard strengthens the execution pipeline without requiring application-level modifications, enabling financial institutions to deploy regulated workloads on decentralized infrastructure while preserving the scalability properties that motivate rollups.

Future work lies along several promising directions. Extending RegSpec with controlled temporal constructs or enriched type systems may broaden expressiveness while preserving decidability. Alternative trust models for ordering, such as stake-based committees or reputation systems, could reduce reliance on honest-majority assumptions. Performance improvements through hardware acceleration, optimized state-diff computation, or faster threshold primitives may further reduce system overhead.

Additional opportunities include integrating privacy-preserving compliance (e.g., via zero-knowledge proofs), developing cross-jurisdictional rule-management frameworks, and applying formal verification to the entire RegGuard stack. Combining rule-based validation with machine-learning-based anomaly detection provides another complementary direction for defense-in-depth.

Overall, RegGuard demonstrates that scalable rollup architectures can incorporate strong legitimacy guarantees without sacrificing decentralization or throughput. We anticipate that this approach will support the next generation of regulated financial applications on public blockchain infrastructure.

\bibliographystyle{IEEEtran}
\bibliography{references}
\end{document}